\def\BibTeX{{\rm B\kern-.05em{\sc i\kern-.025em b}\kern-.08em
    T\kern-.1667em\lower.7ex\hbox{E}\kern-.125emX}}
\newcommand{\circled}[1]{\tikz[baseline=(char.base)]{\node[shape=circle,draw,inner sep=1.0pt](char){#1}}}
\algrenewcommand\algorithmicrequire{\textbf{Input:}}
\algrenewcommand\algorithmicensure{\textbf{Output:}}
\algnewcommand{\LeftComment}[1]{\Statex \(\triangleright\) #1}
\newtheorem{lemma}{Lemma}
\newtheorem{proposition}{Proposition}
\crefname{definition}{Definition}{Definitions}
\crefname{lemma}{Lemma}{Lemmas}
\crefname{remark}{Remark}{Remarks}
\crefname{proposition}{Proposition}{Propositions}
\crefname{corollary}{Corollary}{Corollaries}
\crefname{algorithm}{Algorithm}{Algorithms}
\newcommand{\multiline}[1]{%
  \begin{tabularx}{\dimexpr\linewidth-\ALG@thistlm}[t]{@{}X@{}}
    #1
  \end{tabularx}
}
\DeclareMathOperator*{\argmin}{arg\,min}
\newcommand{\sN}{\mathcal{N}}
\newcommand{\sV}{\mathcal{V}}
\newcommand{\sE}{\mathcal{E}}
\newcommand{\sG}{\mathcal{G}}
\newcommand{\sI}{\mathcal{I}}
\newcommand{\sO}{\mathcal{O}}
\newcommand{\sA}{\mathcal{A}}
\newcommand{\sR}{\mathcal{R}}
\newcommand{\mA}{\mathbf{A}}
\newcommand{\vw}{\mathbf{w}}
\begin{document}

\title{
Delay and Overhead Efficient Transmission Scheduling for Federated Learning in UAV Swarms

% \thanks{This work has been submitted to the IEEE for possible publication. Copyright may be transferred without notice, after which this version may no longer be accessible.}
% \thanks{D.~N.~M.~Hoang, V.~T.~Truong, H.~D.~Le, and L.~B.~Le are with INRS, University of Qu\'{e}bec, Montr\'{e}al, QC H5A 1K6, Canada. Emails: \{duc.hoang, tuan.vu.truong, duy.hung.le, long.le\}@inrs.ca.}
\thanks{The authors are with INRS, University of Qu\'{e}bec, Montr\'{e}al, QC H5A 1K6, Canada. Emails: \{duc.hoang, tuan.vu.truong, duy.hung.le, long.le\}@inrs.ca.}
% \thanks{}
\vspace*{-8pt}
}

% {\footnotesize \textsuperscript{*}Note: Sub-titles are not captured in Xplore and should not be used}

\author{\IEEEauthorblockN{Duc~N.~M.~Hoang, Vu~Tuan~Truong, Hung~Duy~Le, and Long~Bao~Le}
\vspace{-15mm}
}

\maketitle

\begin{abstract}
This paper studies the wireless scheduling design to coordinate the transmissions of (local) model parameters of federated learning (FL) for a swarm of unmanned aerial vehicles (UAVs).
The overall goal of the proposed design is to realize the FL training and aggregation processes with a central aggregator exploiting the sensory data collected by the UAVs but it considers the multi-hop wireless network formed by the UAVs.
Such transmissions of model parameters over the UAV-based wireless network potentially cause large transmission delays and overhead. 
Our proposed framework smartly aggregates local model parameters trained by the UAVs while efficiently transmitting the underlying parameters to the central aggregator in each FL global round.
We theoretically show that the proposed scheme achieves minimal delay and communication overhead.
Extensive numerical experiments demonstrate the superiority of the proposed scheme compared to other baselines.
\end{abstract}

\begin{IEEEkeywords}
Communication efficiency, delay efficiency, federated learning, model aggregation, unmanned aerial vehicles
\end{IEEEkeywords}

\section{Introduction} \label{sec:intro}
Unmanned aerial vehicles (UAVs) have emerged as a transformative technology, particularly for crowdsensing tasks in remote areas.
The UAVs can revolutionize data collection by using onboard sensors and cameras to gather a diverse range of information in regions where traditional methods face challenges due to geographical constraints or limited resources.
What enhances their utility is the integration of federated learning (FL)~\cite{mcmahan_communication-efficient_2017}, a distributed learning approach that harnesses the computational power of UAVs for in-the-air collaborative learning.
A swarm of UAVs can be rapidly deployed for FL-empowered crowdsensing missions~\cite{wang_learning_2021-1}, where they collect data and conduct model training thanks to their onboard computational resources and advanced sensors.
With the aid of FL, the machine learning (ML) model trained by the swarm can leverage the comprehensive dataset collected by all individual UAVs.
%is updated in each global aggregation step, resulting in significantly improved predictive performance.
Through this amalgamation of UAVs and FL, crowdsensing can be readily deployed in remote areas without requiring communications and computing infrastructure.
\par

To implement FL for a UAV swarm, one can require the UAVs to transmit their local model updates to a ground station or server for global model aggregation in each global round. However, in challenging scenarios like natural disasters, the existing stations can be compromised and thus the aggregation step in the FL training process can be affected. 
In addition, even the wireless transmission of model updates between the UAVs and the ground station in mission-critical applications (e.g., military surveillance operations) can pose risks of privacy leakage and eavesdropping~\cite{lu_differentially_2020} to sensitive information. 
In these infrastructure-less settings, the UAVs should collaboratively undertake data acquisition, model training, and model aggregation among themselves, eliminating the necessity for an external central station. To this end, an efficient transmission scheduling scheme for the model parameters to coordinate the FL training and aggregation among the UAVs is needed, which is the focus of this paper.
\par

Designing such a transmission scheduler for collaborative learning in an infrastructure-less UAV swarm, however, faces several challenges.
Firstly, traditional FL typically requires a predetermined central aggregator to collect and aggregate all the model updates submitted by local nodes in each global round.
For UAV swarms deployed in isolated and resource-deficient areas, there is a need for self-coordination among the UAVs to enable model aggregation without the help of an aggregator on the ground.
Secondly, traditional FL usually assumes that direct communications between local nodes and the aggregator are possible.
However, UAVs have a limited wireless communication range; therefore, the distance between two UAVs can be larger than the communication range, and multi-hop communications are thus necessary.
Supposing that a particular UAV is chosen to be the aggregator, multi-hop communications for exchanging the local model updates between it and other UAVs can introduce significant latency and transmission overhead.
%Finally, the transmission scheduling scheme should be designed taking into account communications delay and overhead. 
Thus, the development of an effective scheduling scheme supporting the model data exchanges and aggregation for FL in infrastructure-less UAV swarms is worth investigating.
\par

There are several studies on in-the-air FL~\cite{liu_federated_2021-1, wang_learning_2021-1, he_cgan-based_2023}. However, most of them assumed a centralized server for the deployed UAVs, which is not always available in scenarios without dedicated infrastructure. \citet{qu_decentralized_2021}~proposed an FL framework in which each UAV broadcasts its local models to immediate neighbors, and likewise, each UAV aggregates the model from its immediate neighbors. However, there was no detailed study of communication overhead and delay, and only limited numerical results were presented.
\citet{xiao_fully_2021}~proposed a learning scheme for FL in UAV swarms considering the varying locations of UAVs over time.
While the proposed scheme guarantees convergence and communication efficiency over dynamic scenarios, latency reduction is not investigated. \citet{qu_efficient_2021}~proposed a hierarchical FL framework to minimize the energy consumption during the training process in a UAV swarm, whose operation depends on a leading UAV.
However, communication overhead was not thoroughly studied.
\citet{feng_blockchain-empowered_2022}~devised a blockchain-empowered FL scheme for collaborative learning between the UAVs. However, the authors did not design a transmission scheduling scheme with low communication overhead and delay.
% \citet{al-abiad_decentralized_2023-1}~developed a decentralized FL framework in which local model updates are based on overlapped clustering to minimize energy usage. Although in their proposed design, model aggregation is not performed by a central aggregator, a centralized scheduling server is still required on the ground.
\par

Considering the aforementioned research gaps, this paper aims to develop an efficient transmission scheduling scheme for FL coordination in UAV swarm.
In detail, we first propose a transmission scheduler that realizes the conventional FL scenario by choosing a suitable aggregator UAV and scheduling the UAVs to efficiently transmit the model updates to the chosen aggregator, based on which we obtain a global model in each training round.
Next, by modeling the swarm network as a graph and partitioning time into discrete transmission time slots, we prove that the proposed scheme achieves minimum delay and communication overhead.
Extensive numerical experiments are conducted to show that the proposed scheme attains desired convergence and outperforms other baselines in terms of communication delay and overhead.

% The rest of this paper is organized as follows. Section~\ref{sec:sys} presents the system model as well as the sensing and training procedure, subsequently introducing the scheduling problem. We propose the novel scheduling framework in Section~\ref{sec:schem}. In Section~\ref{sec:results}, extensive numerical results and analyses are given, followed by the concluding remarks in Section~\ref{sec:con}.

\section{System Model} \label{sec:sys}
We consider a UAV swarm deployed over an area where they collect sensory data and collaboratively train a machine learning (ML) model for a given task (e.g., classification), as depicted in Fig.~\ref{fig:sys_model}.
Note that the UAVs are distributed in a way that they are dense enough to sense different parts of the region of interest.
We denote the set of UAVs as $\sV = \{ 1, \dots, V \}$, assuming that the UAVs hover at the same fixed altitude and maintain their 2-D positions during the mission time.
Note that the UAVs are deployed in an infrastructure-less setting, meaning that there is no ground station in charge of maneuvering them.
Because of this, each UAV is supposed to carry a battery that provides enough energy for the entire mission so that they do not have to return and recharge. 
\begin{figure}[htbp]
    \centering
    \includegraphics[width=1.0\columnwidth]{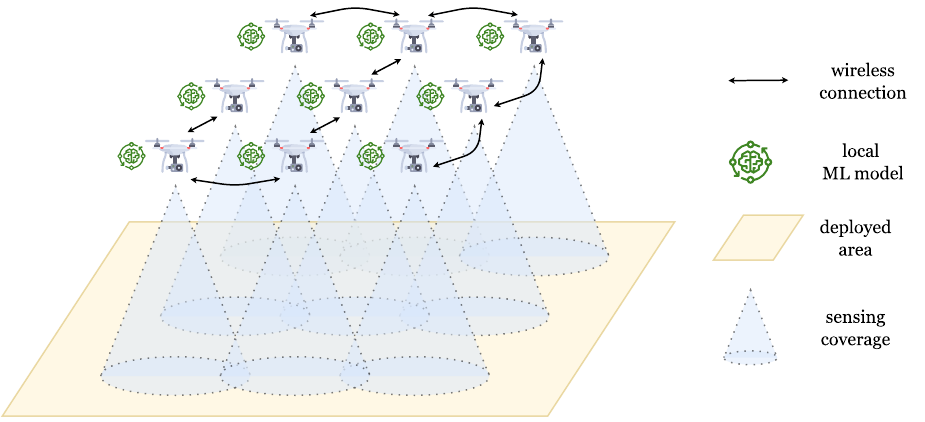}
    \caption{Example of UAV swarm deployment.}
    \label{fig:sys_model}
\end{figure}
\par

In terms of wireless communication, we assume that all UAVs have a maximum communication range $D^{\mathsf{com}}_{\max}$ to transmit and receive data with their neighbors.
All the UAVs are also supposed to be crowded enough so that every pair is able to communicate with each other either through direct or multi-hop communication. 
We assume that the available communication bandwidth is divided into subbands (e.g., using orthogonal frequency division multiple access or OFDMA), which are allocated to the UAVs so that concurrent transmissions between the UAVs are possible with negligible interference.
% Note that in actual scenarios, all the UAVs may not be connected and full connectivity can only be possible within each connected part of UAVs. However, we forgo this scenario in our work because our framework can be applied to each connected part in the network straightforwardly. 
\par

Recall that the mission of the UAVs is to collect the necessary data to train an ML model.
Because each UAV is dedicated to a sensing area, the model on each UAV would be limited to learning a handful of training information if it can only make use of its own available data.
Consequently, this could result in, for example, the poor predictive performance of the trained model at the end of the mission.
Therefore, it is evident that a collaborative online FL training process among the UAVs is necessary.
The following describes how the online FL training and aggregation procedure is conducted within the UAV swarm.

\subsection{Federated Learning in UAV Swarms} \label{sec:framework}
We consider an online FL framework for a UAV swarm as follows. Each UAV $v$ continuously collects its sensory data during the flight time. The local dataset on UAV $v$, denoted as $\mathcal{D}_v$ with $\left| \mathcal{D}_v \right|$ data samples, is used to train a model that aims to minimize an objective function $F_v (\vw)$, where $\vw$ is the model parameter vector. Then the global optimization problem in the whole swarm is formally defined as
\begin{equation}
\label{eq:glob_obj_func}
   \min_{\mathbf{w}}  F(\mathbf{w}) \triangleq \sum_{v \in \mathcal{V}} \lambda_v F_v(\mathbf{w}) ,
\end{equation}
where $\lambda_v = \left| \mathcal{D}_v \right| / \sum_{u \in \sV} \left| \mathcal{D}_u \right|$ is the weight coefficient corresponding to UAV $v$. To solve~\eqref{eq:glob_obj_func}, UAV $v$ performs an iterative stochastic gradient descent (SGD) update using its dataset $\mathcal{D}_v$ to obtain its local update $\vw^{(r)}_{v}$ at global round $r$. Then the global model update is calculated as
\begin{equation} \label{eq:weight_update}
\vw^{(r)} = \sum_{v \in \sV} \lambda_v \vw^{(r)}_v,
\end{equation}
which is then used as the starting training parameters for the next round $r+1$ on each UAV. It is seen that the global update~\eqref{eq:weight_update} depends on the contributions of many participating UAVs during the training process. Therefore, in the following, we go through the entire data collection and collaborative training process of the UAV swarm.

\subsection{Collaborative Sensing and Training Procedure}
In this part, we adopt the FL design including one aggregator and multiple training clients from the seminal FL design, which can take full advantage of distributed datasets~\cite{li_on_2020}.
Subsequently, the following steps describe the activities performed by the ordinary UAVs and a dedicated root UAV, denoted as $v_c$, deployed for a crowdsensing task.
The root UAV is responsible for aggregating all local models into a global model in each training round.
\begin{enumerate}[label=\protect\circled{\arabic*}]
    \item \textit{Initialization.} Each UAV transmits its beacon message notifying its surrounding neighbors of its presence and exchanging necessary network control information. At the end of this step, the UAVs know the information of the entire network and the root UAV $v_c$.
    
    \item \textit{Data collection.} The UAVs use their on-board sensors to acquire sensory data for a fixed period of time.
    
    \item \textit{Local training.} The UAVs perform local model updating based on their own dataset using the SGD method with a fixed number of iterations to obtain their $\vw^{(r)}_v$ at round $r$. 
    
    \item \textit{Model aggregation.} The root UAV $v_c$ performs the global model aggregation process by computing $\vw^{(r)}$ using~\eqref{eq:weight_update} based on the local updates obtained from other UAVs.
    
    \item \textit{Model dissemination.} The root UAV transmits $\vw^{(r)}$ to other UAVs. The whole UAV swarm is then ready for the next round $r+1$ starting from step~\circled{2}.
\end{enumerate}

Note that step~\circled{1} is executed once as the UAVs are deployed.
Steps~\circled{2}--\circled{5} are conducted within one global round and repeated until the training algorithm converges or the maximum number of global rounds is reached.
Next, we will provide in-depth analyses of delay and message transmission overhead during these steps.

\subsection{Delay and Overhead Analyses} \label{subsec:ana}
In step~\circled{1}, since the network discovery process is initiated only a single time and the network remains static within the context of our study, we can afford to overlook the delay and overhead associated with this step.
Next, in step~\circled{2}, all the UAVs collect the data in a fixed period of time, which is a predefined parameter.
Therefore, we can consider this to last a fixed duration with no transmission, thus leaving little room for delay reduction.
A similar explanation can also be used for step~\circled{3} as the number of SGD iterations is a selective parameter.
Suppose that all the UAVs have the same processing unit and conduct training in parallel, the training time spent by one UAV is also the amount for the whole network in this step.
\par

In step~\circled{4}, the UAVs have finished local training and their local updates have to be accumulated at $v_c$ to realize the global model aggregation process.
It is challenging to design an efficient transmission scheme for the delivery of local updates of individual UAVs to $v_c$.
This is because the large-scale UAV network requires multi-hop communications for exchanges of model parameters among UAVs, causing potentially large delays and transmission overhead.
We denote the time needed for $v_c$ to receive the local updates from the swarm as $T_{\mathsf{trans}}$ and the number of message transmissions among UAVs as $N_{\mathsf{trans}}$, which will be carefully revisited later.
Then after $T_{\mathsf{trans}}$, $v_c$ computes the global model, which is assumed to consume a negligible amount of time.
Finally, the model dissemination step~\circled{5} requires $v_c$ to send back the aggregated global model to other UAVs. This is a well-studied wireless broadcast problem and a known heuristic broadcast algorithm, e.g., \cite{pires_broadcast_2019}, can be employed to reduce both delays and transmission overhead.

\subsection{Scheduling Problem}
After the analyses, we are left with the delay $T_{\mathsf{trans}}$ and overhead $N_{\mathsf{trans}}$ in step~\circled{4} to be improved.
In fact, these two parameters are very much dependent on a scheduling scheme that manages the transmissions in the network.
Consider a toy example of transmission activities that may happen without a scheduler in this step as illustrated in Fig.~\ref{fig:naive1}.
In this example, each UAV $v_i$ individually transmits its local update $\vw_{v_i}$ to the root UAV $v_c$ for aggregation.
For simplicity, let us assume $\lambda_{v_i} = 1$ in~\eqref{eq:weight_update}. 
One can see that $\vw_{v_5}$ can traverse the path $(v_5, v_6, v_c)$ (i.e., the shortest path) instead of $(v_5, v_6, v_4, v_c)$ to save a transmission.
It can also be observed that we can reduce transmission overhead (or the required number of transmissions) if individual UAVs aggregate/accumulate the model parameters before forwarding them toward the root UAV.
In particular, if $v_1$ can add $\vw_{v_1}$, $\vw_{v_2}$ and $\vw_{v_3}$ together and send the sum in one message as depicted in Fig.~\ref{fig:naive2}, we can further reduce $N_{\mathsf{trans}}$. 

%At the same time, the delay $T_{\mathsf{trans}}$ can be reduced if $v_1$ can wait for $v_2$ and $v_3$ and aggregate them before transmitting the resulting weighted average model update in one message instead of transmitting them in separate messages.
\begin{figure} [htpb]
    \captionsetup[subfigure]{justification=centering}
    \centering
     \begin{subfigure}[b]{0.493\linewidth}
         \centering
         \includegraphics[width=\textwidth]{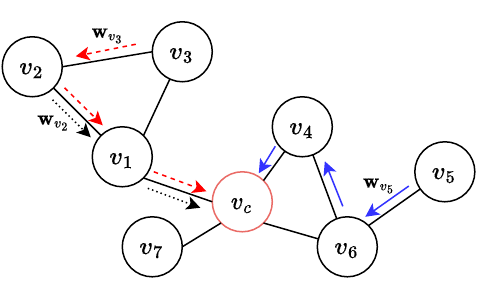}
         \caption{}
         \label{fig:naive1}
     \end{subfigure}
    \hfill
     \begin{subfigure}[b]{0.493\linewidth}
         \centering
         \includegraphics[width=\textwidth]{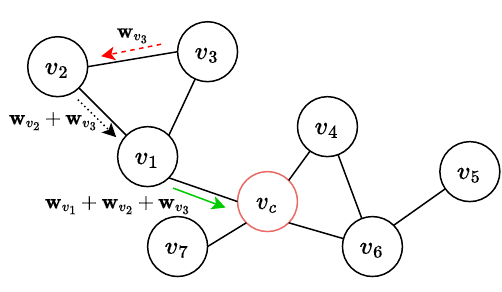}
         \caption{}
         \label{fig:naive2}
     \end{subfigure}
    \caption{Illustrations of transmission schemes: (a) Independent transmissions, (b) Combined aggregation and transmissions.}
    \label{fig:model_aggr}
\end{figure}

Based on the above intuitions and observations, in the next section, we design a scheme that aims to minimize both delay and overhead by efficiently scheduling transmissions during step~\circled{4} in a general case given any swarm topology. Since the UAV network is static during the entire flight time, the schedule can be applied to other global rounds until the termination conditions are met. 

\section{Proposed Scheduling Scheme} \label{sec:schem}
Our proposed scheduling scheme can be designed based on the following observations. 
If a node can perform ``local aggregation'' based on its own local update and that of other nodes, it can transmit this locally aggregated model as just one message instead of sending its own one and relaying others separately. 
However, each node must determine which local updates it should aggregate and thus wait to receive them. 
Therefore, a scheduling scheme is required for such coordination of model aggregation in the UAV network, which is the focus of our design.
In this section, based on the swarm deployment assumptions from Section~\ref{sec:sys}, we will formally address this through a graph-based approach.
\par

We consider the swarm as an undirected and unweighted connected graph $\sG = (\sV, \sE)$.
The set of UAVs $\sV$ is reused as the set of vertices or nodes\footnote{Hereinafter, ``UAV'' and ``node'' are used interchangeably.}, and $\sE$ is the set of edges formed as two nearby UAVs are within the direct communication range of each other.
We assume that $T_{\mathsf{trans}} = \eta \Delta t$ is divided into $\eta$ time slots and each time slot lasts a fixed duration $\Delta t$.
$\Delta t$ is also considered as the transmission time for a message between two adjacent UAVs.
Let $d(\cdot, \cdot)$ represent the number of hops along the shortest path for two UAVs.
Next, we aim to address the first crucial part in the process of designing the scheduling scheme, which is the choice of the root node.

\subsection{Root Node Selection and Associated Delay} \label{sec:root}
Because most of the transmission activities in the network are for the delivery of local updates to the root UAV $v_c$, here we specify how to choose an efficient one.
First, we define the node eccentricity of a node $v$ as $\epsilon(v) \triangleq \max_{u \in \sV} d(v, u)$, which is the longest distance in hop count from $v$ to any other node $u$ in the graph.
Then, a UAV $v_c \in \sV$ is chosen as the root node if $v_c = \argmin_{v \in \sV} \{ \max_v \{d(v, u)\}_{u \in \sV } \}$.
In other words, $v_c$ has the minimum eccentricity or the shortest longest distance from itself to other nodes.
We also define this value as the minimum graph eccentricity, denoted as $\epsilon(\sG) \triangleq \min \{ \max_v \{ d(u, v) \}_{u \in \sV_{\sG}} \}$.
\par 

The proposed choice of root node is based on the following observations. First, in terms of delay minimization, given any root node $u$, $u$ must wait for the furthest node to receive its local update. 
Formally, this amount of time can be calculated as $T_{\mathsf{trans}} = \max_{w \in \sV} d(u, w) \Delta t = \epsilon(u) \Delta t $. 
Then it is clear that to minimize $T_{\mathsf{trans}}$, we should choose the root node so that it has the shortest longest distance to any other node, which is exactly the definition of minimum graph eccentricity $\epsilon(\sG)$. 
Then we can formally state that, by the definition of $v_c$, $\min T_{\mathsf{trans}} = \epsilon(\sG) \Delta t = \epsilon(v_c) \Delta t \triangleq T^*$, i.e., $\min \eta = \epsilon(\sG)$.
This result aids in selecting the root node $v_c$ to obtain the minimum $T_{\mathsf{trans}}$ given an arbitrary swarm topology.
Next, we aim to analyze the communication overhead $N_{\mathsf{trans}}$.

\subsection{Communication Overhead Analysis}
Recall that in step~\circled{4}, each node has to transmit its local update to a root node $v_c$ for global model aggregation.
%and the root node does not have to transmit to any other node for aggregation.
Since an edge $e \in \sE$ is formed based on the direct communication of two nodes, we assume that each time a message traverses an edge, it can be counted as a transmission.
Then the number of messages to be transmitted is minimized when each node, except the root node, only transmits once to a direct neighbor.
Formally, we can state that $\min N_{\mathsf{trans}} = |\sV| - 1 \triangleq N^*$.
$N^*$ can be easily achieved in the setting where all nodes directly transmit their local update to an aggregator in single-hop networks.
However, the swarm forms a multi-hop network and thus, it requires multi-hop transmission for the model update delivery.
Also as previously discussed, if each node can deliver its own local update as well as that of the others in one transmission, the overhead can be reduced.
This intuition guides us to the next step, where we design a transmission scheduling scheme that aims to achieve both $T^*$ and $N^*$.

\subsection{Proposed Scheduling Scheme}
Our proposed scheduling scheme contains two main parts.
The first part, named \textproc{Scheduler}, is executed once by the root UAV $v_c$ at the end of step~\circled{1} after the UAV swarm has identified the root UAV $v_c$ chosen based on Section~\ref{sec:root}.
Then $v_c$ starts to construct a subgraph $\sI = (\sV, \sE_{\sI})$ as a breadth-first tree based on $\sG$, and a set $\sA$ of $\eta = \epsilon(\sI)$ transmission matrices $\mA$.
In detail, for every $t$ we define a zero-initialized transmission matrix $\mA^{(t)}$ of size $|\sV| \times |\sV|$.
We also define the tier $k$ of a node $v$ to be its distance (in hop count) $d(v, v_c)$ to the root $v_c$.
Subsequently, the set of all nodes at the same tier $k$ is denoted as $\sV_k$.
Then, a tier-$k$ node $i$ is scheduled to transmit to $j$ if $j$ belongs to the neighbor set $\sN_{\sI}(i)$ of $i$ in $\sI$ and is one-hop closer to $v_c$, i.e., $j \in \sV_{k-1}$, whose scheduled transmission is indicated by $\mA^{(t)}_{i,j} = 1$. 
The constructed $\sA$ is then sent to all other nodes.
\par

The second part is a function \textproc{NodeAction} that is executed in parallel by all $v \in \sV$ at every $t$ during~\circled{4}. At $t$, $v$ checks to receive other messages from its higher-tier neighbors (i.e., ones with larger $k$) in a set $\sR_{v}$, as scheduled in $\mA^{(t-1)}$ at time $t-1$.
As previously discussed, in our design, a node can combine its local update with that of others to transmit just one message denoted as $\vw^{\mathsf{aggr}}_v$. 
By combining these updates from $u \in \sR_v$ and its own local update $\vw_v$, node $v$ computes its $\vw^{\mathsf{aggr}}_v$.
Finally, node $v$ sends $\vw^{\mathsf{aggr}}_v$ to a lower-tier neighbor based on $\mA^{(t)}$.
The complete transmission scheduling algorithm is presented in \cref{algo:sched}. 

\begin{algorithm}
\caption{Proposed Transmission Scheduler} \label{algo:sched}
\small
\begin{algorithmic}[1]

\LeftComment{\textproc{Scheduler} is executed once by $v_c$ in step~\circled{1}.}
\Function{Scheduler}{$\sG$, $v_c$}
    \State Construct $\sI$ based on $\sG$ and $v_c$
    \State $t$, $\eta$, $\sA$ $\gets$ $1$, $\epsilon(\sI)$, $\{ \mA^{(1)}, \dots, \mA^{(t)}, \dots, \mA^{(\eta)} \}$
    % \State $\sA \gets \{ \mA^{(1)}, \dots, \mA^{(t)}, \dots, \mA^{(\eta)} \}$

    \While{$t \leq \eta$}
        \For{each $v$ in $\sV_{\eta - t + 1}$}
            \For{each $u$ in $\sN_{\sI}(v)$}
                \If{$u$ is in $\sV_{\eta - t}$}
                    \State $\mA^{(t)}_{v,u} \gets 1$
                \EndIf
            \EndFor
        \EndFor
        \State $t$ $\gets$ $t + 1$
    \EndWhile
    \State \Return $\sA$ to all other nodes
\EndFunction

\LeftComment{\textproc{NodeAction} is executed by all nodes $v \in \sV$ in parallel at every $t$ during step~\circled{4}.}
\Function{NodeAction}{$t$, $\sA$, $\sG$}
    \If{$t \neq \epsilon(\sG)- d(v, v_c) + 1$} \Return 
    \EndIf
    
    \State $\sR_{v}\gets \{\}$
    \For{row $u$ in $\mA^{(t-1)} \in \sA$}
        \If{$\mA^{(t-1)}_{u,v} == 1$}
            \State Add $u$ to $\sR_{v}$
        \EndIf
    \EndFor 
    \State Wait to receive all updates from nodes in $\sR_{v}$
    \State Perform aggregation: $\vw_v^{\mathsf{aggr}} \gets \lambda_{v}\vw_{v} + \sum_{u \in \sR_{v}} \vw_{u}^{\mathsf{aggr}}$

    \For{column $u$ in $\mA^{(t)} \in \sA$}
    \If{$\mA^{(t)}_{v, u} == 1$}
        \State Send $\vw_{v}^{\mathsf{aggr}}$ to $u$
        \State \Return
    \EndIf
    \EndFor 
\EndFunction
\end{algorithmic}
\end{algorithm}

For the proposed algorithm, we have the following results.
\begin{proposition} \label{the1}
At the end of global round $r$, the proposed transmission scheduling scheme in \cref{algo:sched} achieves:
\begin{enumerate}[label=\protect{\texttt{R\arabic*}}]
    \item A globally aggregated model equal to~\eqref{eq:weight_update}.
    \item The minimum transmission time $T^*$.
    \item The minimum number of transmitted messages $N^*$.
\end{enumerate}
\end{proposition}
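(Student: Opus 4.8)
The plan is to treat all three claims through the structure of the breadth-first tree $\sI$, so I would first pin down its basic properties: $\sI$ is a spanning tree of $\sG$ rooted at $v_c$ with $|\sV|-1$ edges, every non-root node has a unique parent one tier closer to $v_c$, and crucially the tree distance equals the graph distance. This last property is the defining feature of a BFS tree and guarantees $\epsilon(\sI) = \epsilon(v_c) = \epsilon(\sG) = \eta$, so the tiers $\sV_0,\sV_1,\dots,\sV_\eta$ genuinely partition $\sV$ and the schedule ranges over exactly $\eta$ slots, consistent with the root choice of Section~\ref{sec:root}.

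For R1 I would argue by downward induction on the tier index $k$, proving that once node $v$ has executed \textproc{NodeAction}, its aggregate satisfies $\vw_v^{\mathsf{aggr}} = \sum_{u \in \mathcal{T}_v} \lambda_u \vw_u$, where $\mathcal{T}_v$ denotes the subtree of $\sI$ rooted at $v$. The base case is the leaves (highest-tier nodes), whose receive set $\sR_v$ is empty, so $\vw_v^{\mathsf{aggr}} = \lambda_v \vw_v$. For the inductive step, the key observation is that $\sR_v$ is exactly the set of children of $v$: a node $u$ enters $\sR_v$ iff $\mA^{(t-1)}_{u,v}=1$, which the \textproc{Scheduler} sets precisely for $u \in \sN_{\sI}(v)\cap\sV_{k+1}$. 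Substituting the inductive hypothesis for each child into the update rule $\vw_v^{\mathsf{aggr}} = \lambda_v\vw_v + \sum_{u\in\sR_v}\vw_u^{\mathsf{aggr}}$ collapses the nested sum to $\sum_{u\in\mathcal{T}_v}\lambda_u\vw_u$. Taking $v=v_c$ yields $\mathcal{T}_{v_c}=\sV$ and hence the global model of~\eqref{eq:weight_update}.

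What makes the induction sound is the timing the schedule imposes, which simultaneously delivers R2. A tier-$k$ node is activated at slot $t=\eta-k+1$, so its children at tier $k+1$ act at slot $t-1$; thus every child has transmitted its aggregate before the parent aggregates and forwards, and no node both sends and receives in the same slot since the senders and receivers at any fixed slot occupy adjacent tiers. Because tier-$1$ nodes transmit at slot $\eta$ and $v_c$ collects their messages by the end of that slot, the process terminates in exactly $\eta$ slots, giving $T_{\mathsf{trans}} = \eta\,\Delta t = \epsilon(\sG)\,\Delta t = T^*$. Combined with the lower bound $\min T_{\mathsf{trans}}=\epsilon(\sG)\,\Delta t$ established in Section~\ref{sec:root}, this is optimal. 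The OFDMA assumption of Section~\ref{sec:sys} is what permits the concurrent same-slot transmissions across distinct edges without interference.

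For R3 I would simply count transmissions edge by edge. Since $\sI$ is a spanning tree, each non-root node has exactly one parent, and \textproc{NodeAction} has each node transmit once to that parent and immediately \Return, while $v_c$ never transmits. The total is therefore $|\sV|-1 = N^*$, matching the overhead lower bound. I expect the main obstacle to be the careful bookkeeping that makes R1 and R2 mutually consistent: confirming that the activation time $t=\epsilon(\sG)-d(v,v_c)+1$ in \textproc{NodeAction} coincides with the slot at which \textproc{Scheduler} marks $v$'s outgoing edge, so that the receptions read from $\mA^{(t-1)}$ and the send written into $\mA^{(t)}$ refer to the correct children and parent respectively. Once this index alignment is verified, all three claims follow.
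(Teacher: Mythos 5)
Your proposal is correct and follows essentially the same route as the paper's own proof: both hinge on the BFS-tree properties (spanning tree with $|\sV|-1$ edges and $\epsilon(\sI)=\epsilon(\sG)$), prove \texttt{R1} by recursive/inductive aggregation along the tree, get \texttt{R2} from $\eta=\epsilon(\sG)$, and get \texttt{R3} by counting one transmission per non-root node. Your version is somewhat more explicit than the paper's (formalizing the subtree induction and verifying the slot/tier index alignment, which the paper leaves implicit), but the underlying argument is identical.
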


\begin{proof}
The proof is deferred to Appendix~\ref{sec:theorem_proof}. 
\end{proof}

Fig.~\ref{fig:example} illustrates how the proposed scheduling scheme works for the swarm previously given in Fig.~\ref{fig:model_aggr}
%\footnote{For global aggregation at the root node, all weight coefficients $\lambda_v$ must be transmitted to the root node. However, the sizes of these coefficients are much smaller than that of model updates so we omit them in the analysis.}
In this example, the red edges belong to the subgraph $\sI$. At $t = 1$, tier-$2$ nodes in the set $\sV_2 = \{ v_2, v_3, v_5\}$ begin to transmit their updates to their respective neighbors in $\sV_1 = \{ v_1, v_4, v_6, v_7 \}$. At this point, they do not have any higher-tier neighbors so $\vw_{v_i}^{\mathsf{aggr}} = \vw_{v_i}$, $\forall v_i \in \sV_1$. At $t = 2$, as the nodes in $\sV_1$ receive the transmitted messages, they compute and transmit their $\vw_{v_j}^{\mathsf{aggr}}$, $v_j \in \sV_1$ based on the received messages and their own local updates $\vw_{v_j}$. Finally, the global model is $\lambda_{\vw_{v_c}} \vw_{v_c} + \sum_{v_j \in \sV_1} \vw_{v_j}^{\mathsf{aggr}}$ as computed by $v_c$. In this example, the delay $T_{\mathsf{trans}}$ is $\epsilon(\sI) \Delta t= 2 \Delta t$ and the overhead $N_{\mathsf{trans}}$ is $|\sV|-1 = 7$\;messages.

\begin{figure}[htbp]
    \centering
    \includegraphics[width=1\columnwidth]{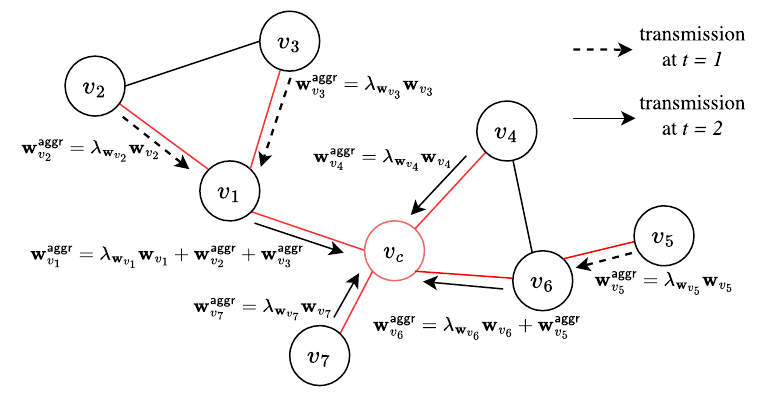}
    \caption{Example of proposed transmission scheduler.}
    \label{fig:example}
\end{figure}

\subsection{Complexity Analysis}
First, \textproc{Scheduler} begins with the construction of $\sI$ with a complexity of $\sO(|\sV|+|\sE_{\sG}|)$.
The computation of $\epsilon(\sI)$ is based on the breadth-first search algorithm and thus is $\sO(|\sV|^2+ |\sV|\times|\sE_{\sI}|)$.
Next, the initialization of $\sA$ can take $\sO(\eta \times |\sV|^2)$ and because $\eta$ can be bounded by $|\sV|$, this step may take $\sO(|\sV|^3)$.
Finally the nested loops also have a complexity of $\sO(|\sV|^3)$, leading to the complexity of the whole function being $\sO(|\sV|^3+ |\sV|\times|\sE_{\sI}|+|\sE_{\sG}|)$.
Regarding \textproc{NodeAction}, its initial condition and initialization steps both take $\sO(1)$.
The first loop, the aggregation step, and the second step all take $\sO(|\sV|)$ time, resulting in the complexity of $\sO(|\sV|)$ for the whole function.

\section{Numerical Results} \label{sec:results}
\subsection{Experiment Settings}
We randomly place different numbers of UAVs over an area of 1000\;m $\times$ 1000\;m so that the UAVs form a connected graph for given values of $D^{\mathsf{com}}_{\max} = 150$\;m and a safety distance of 5\;m. Note that all the UAVs participate in the training process of every global round. We evaluate our proposed scheme using two standard benchmark datasets, namely MNIST and CIFAR-10, for two classification tasks, whose training process aims to minimize a negative log-likelihood loss function. We also develop two simple convolutional neural networks mainly containing several convolutional and fully connected layers for the two tasks. Note that all the results in Section~\ref{sec:delay} are obtained by running each simulation as a global FL round five times and reporting the mean values.

\subsection{Convergence Results}
We show the convergence results of our proposed scheme and an FL approach for UAV swarms introduced by~\citet{qu_decentralized_2021} in terms of training loss and accuracy, as depicted in Figs.~\ref{fig:conv_mnist} and~\ref{fig:conv_cifar}.
In this scenario, we deploy 100 UAVs in total, and the batch size and learning rate are set to 10 and 0.01, respectively.
For both independent and identically distributed (IID) and non-IID settings, we report the training results until the global round where we observe the convergence. It can be seen that our proposed scheme can take advantage of the full participation of all UAVs similar to centralized FL, therefore achieving better convergence speed overall. 
\begin{figure} [htpb]
    \captionsetup[subfigure]{justification=centering}
    \centering
     \begin{subfigure}[b]{0.493\linewidth}
         \centering
         \includegraphics[width=\textwidth]{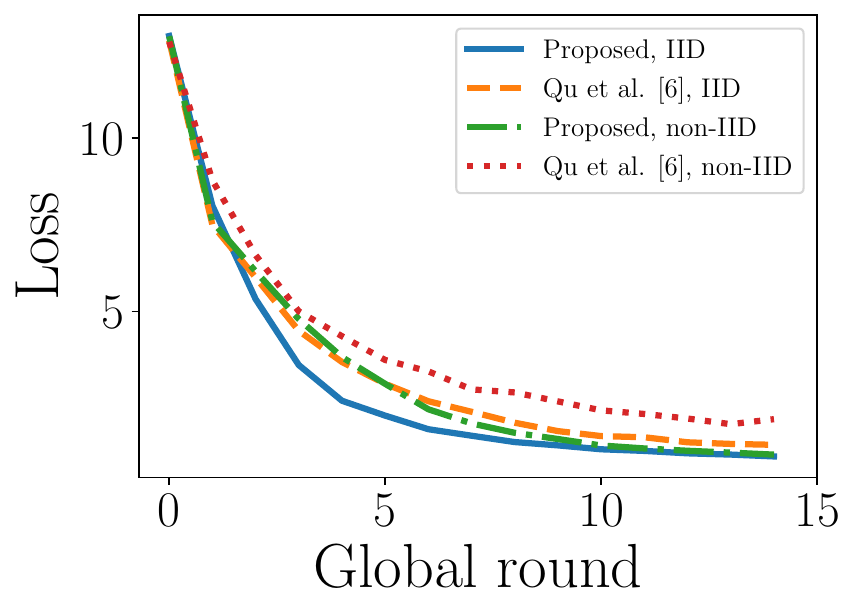}
         % \caption{}
         % \label{fig:loss}
     \end{subfigure}
    \hfill
     \begin{subfigure}[b]{0.493\linewidth}
         \centering
         \includegraphics[width=\textwidth]{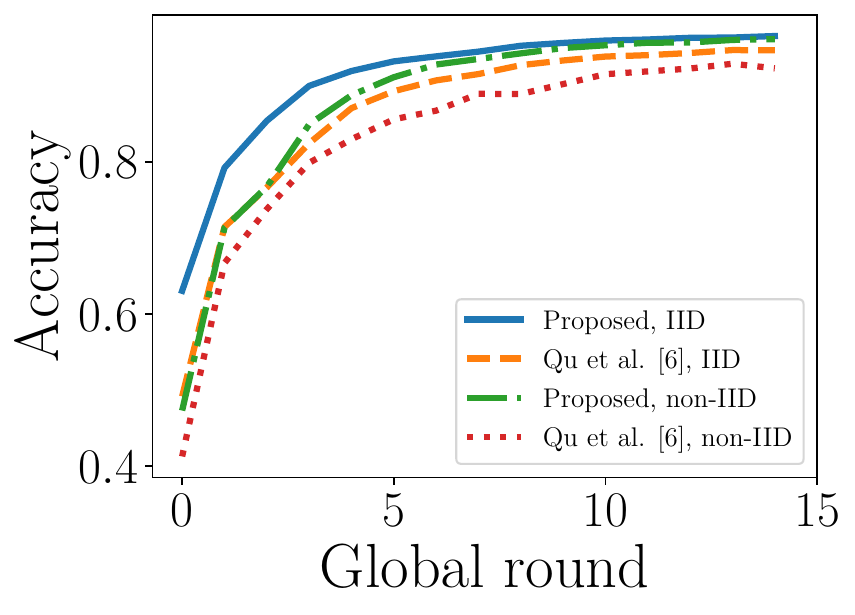}
         % \caption{}
         % \label{fig:acc}
     \end{subfigure}
    \caption{Loss and accuracy with MNIST dataset.}
    \label{fig:conv_mnist}
\end{figure}

\begin{figure} [htpb]
    \captionsetup[subfigure]{justification=centering}
    \centering
     \begin{subfigure}[b]{0.493\linewidth}
         \centering
         \includegraphics[width=\textwidth]{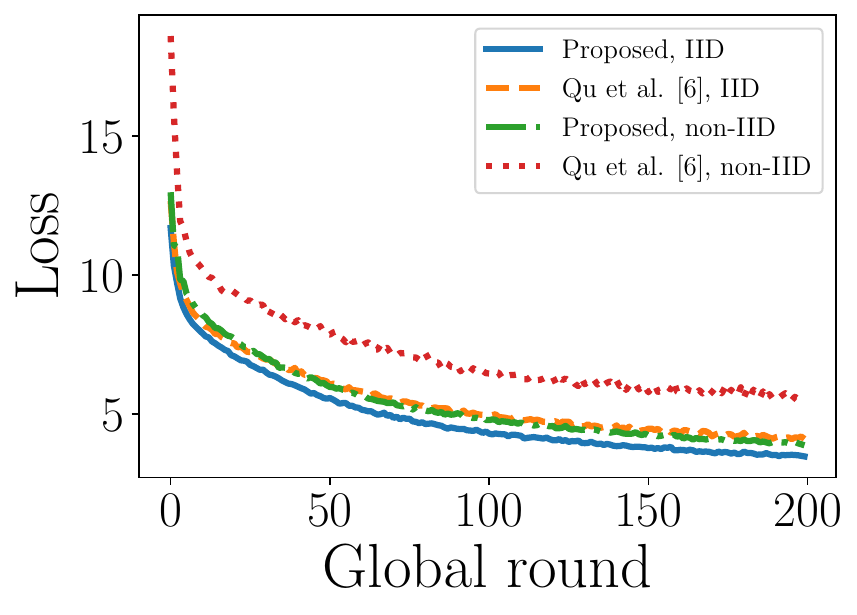}
         % \caption{}
         % \label{fig:loss}
     \end{subfigure}
    \hfill
     \begin{subfigure}[b]{0.493\linewidth}
         \centering
         \includegraphics[width=\textwidth]{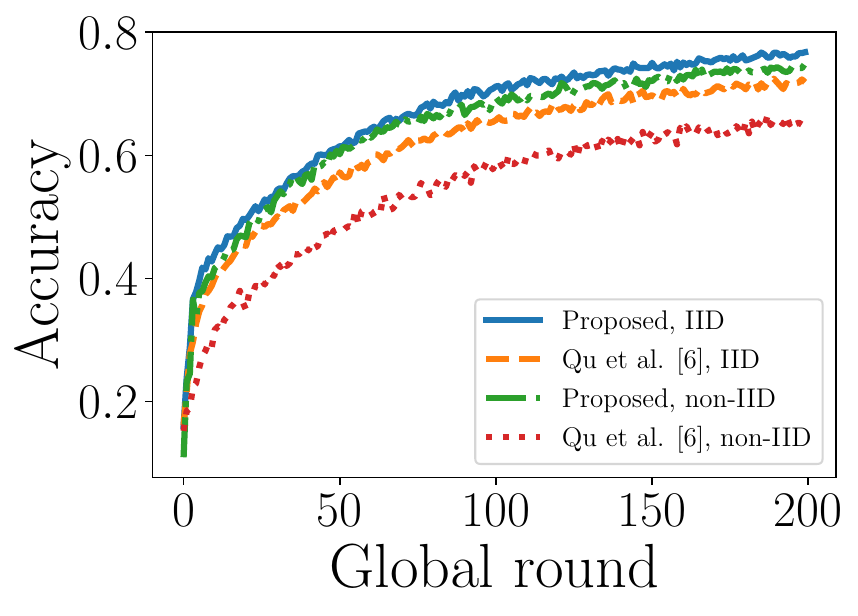}
         % \caption{}
         % \label{fig:acc}
     \end{subfigure}
    \caption{Loss and accuracy with CIFAR-10 dataset.}
    \label{fig:conv_cifar}
\end{figure}

\subsection{Delay and Overhead Results} \label{sec:delay}
In Fig.~\ref{fig:delay}, we compare the delay performance of different strategies of root node choice, which directly impacts the delay that the swarm needs to complete the model transmission process. To be specific, ``Random'' indicates the random selection of a UAV, ``Location-based'' is the choice of the node that is closest to central 2-D locations of the whole network, and ``Proposed'' indicates the selection as in \cref{algo:sched}. It is seen that the proposed scheme outperforms the two baseline schemes regardless of the size of the UAV swarm, especially at 60 and 80 UAVs. This is because we are always able to select the optimal root node given arbitrary swarm topologies and sizes.
\begin{figure}[htbp]
    \centering
    \includegraphics[width=0.5\columnwidth]{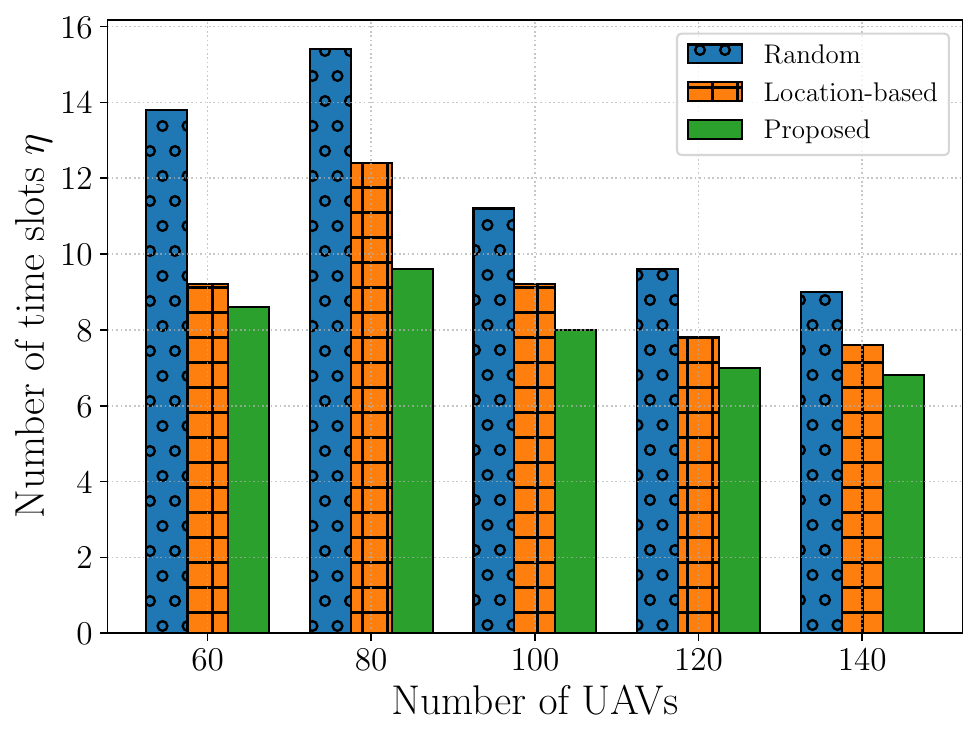}
    \caption{Number of time slots versus swarm size.}
    \label{fig:delay}
\end{figure}
\par

We further investigate the transmission overhead of model messages by different schemes against the swarm size in Fig.~\ref{fig:overhead}. In detail, ``Shortest Path First'' (SPF) denotes the shortest-path unicast of a local update by a node to the root node chosen in \cref{algo:sched}. Compared to the conventional SPF method, our proposed scheme can effectively save the number of transmissions for a wide range of swarm sizes. Meanwhile, it is seen that the amount of overhead by~\citet{qu_decentralized_2021} grows exponentially as the number of UAVs increases. Especially at 400 UAVs, the proposed scheme can save up to 71\% and 96\% of the traffic compared to SPF and~\cite{qu_decentralized_2021}, respectively. 

\begin{figure}[htbp]
    \centering
    \includegraphics[width=0.55\columnwidth]{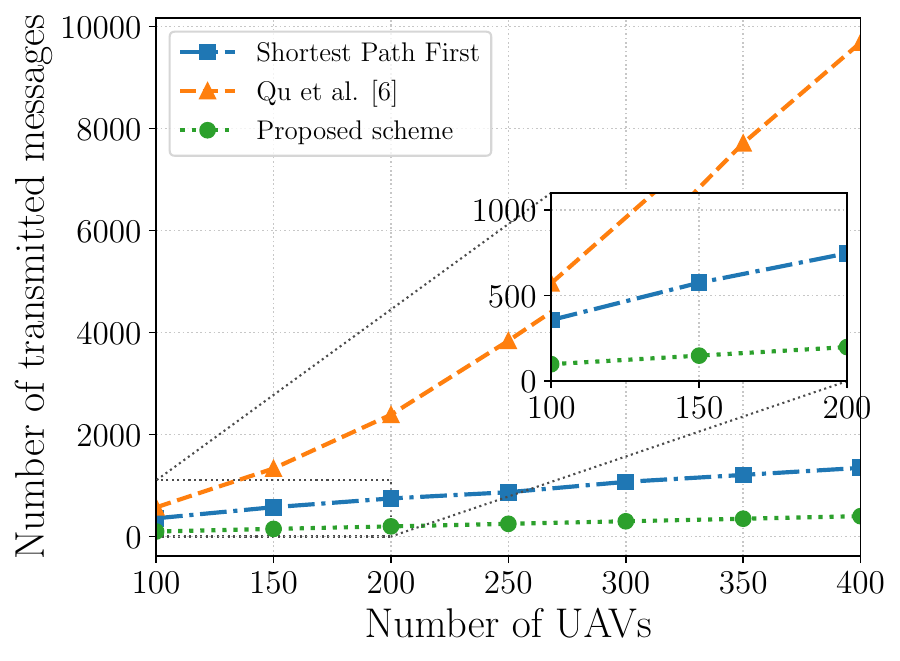}
    \caption{Amount of overhead versus swarm size.}
    \label{fig:overhead}
\end{figure}

\section{Conclusion} \label{sec:con}
We propose a transmission scheduling algorithm for FL conducted within UAV swarms.
To be specific, we first study the delay and communication overhead problems in the FL training process and then propose a transmission scheduler that achieves both minimum delay and overhead.
The simulation results conclude that the proposed scheme outperforms other baseline schemes in terms of convergence speed, delay, and overhead in the network.
Future work may include the study of how topology variation affects the FL framework.

\appendices
\section{Proof of \cref{the1}} \label{sec:theorem_proof}
% \begin{proof}

First, we introduce the following supplemental lemma.
\begin{lemma} \label{lem3}
$\sI$ is a minimum spanning tree, i.e., $ |\sE_{\sI}| = |\sV| - 1 $, and its minimum graph eccentricity is the same as $\sG$, i.e., $\epsilon(\sI) = \epsilon(\sG)$.
\end{lemma}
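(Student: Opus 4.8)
The plan is to establish the two assertions separately, treating the spanning-tree claim as essentially a bookkeeping fact and reserving the real work for the eccentricity equality. First I would handle the claim that $\sI$ is a spanning tree with $|\sE_{\sI}| = |\sV| - 1$. This follows directly from the construction of $\sI$ as a breadth-first (BFS) tree of the connected graph $\sG$ rooted at $v_c$: since $\sG$ is connected, a BFS from $v_c$ reaches every vertex, so $\sI$ spans all of $\sV$; and since each non-root node is attached to exactly one parent during the search, $\sI$ is acyclic and connected, i.e.\ a tree. A tree on $|\sV|$ vertices has exactly $|\sV|-1$ edges, giving $|\sE_{\sI}| = |\sV| - 1$. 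The qualifier ``minimum'' here simply records that a spanning tree is the connected spanning subgraph with the fewest edges, which is the relevant notion in the unweighted setting; I would state this part in one or two lines citing standard BFS-tree properties.

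For the equality $\epsilon(\sI) = \epsilon(\sG)$ of minimum graph eccentricities, I would prove two inequalities. For the lower bound, I would use that $\sI$ is a spanning subgraph of $\sG$, so deleting edges cannot shorten any shortest path: $d_{\sI}(u,w) \geq d_{\sG}(u,w)$ for all $u,w \in \sV$. Hence the eccentricity of every node is at least as large in $\sI$ as in $\sG$, and taking the minimum over all nodes yields $\epsilon(\sI) \geq \epsilon(\sG)$. For the upper bound, I would invoke the key distance-preservation property of BFS trees at the root: $d_{\sI}(v_c, u) = d_{\sG}(v_c, u)$ for every $u$. Therefore the eccentricity of $v_c$ measured in $\sI$ equals its eccentricity in $\sG$, and the latter equals $\epsilon(\sG)$ because $v_c$ was chosen in Section~\ref{sec:root} as a minimum-eccentricity (center) node of $\sG$. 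Since $\epsilon(\sI)$ is the minimum eccentricity over all nodes, we get $\epsilon(\sI) \leq \max_u d_{\sI}(v_c,u) = \epsilon(\sG)$, and combining the two bounds closes the argument.

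The hard part — really the only substantive point — is the upper bound in the second assertion: one must notice that it is \emph{not} necessary for $v_c$ to be a center of the tree $\sI$; it suffices that the eccentricity of $v_c$ within $\sI$ attains $\epsilon(\sG)$, which then upper-bounds the tree's minimum eccentricity. The crux is the BFS identity $d_{\sI}(v_c,\cdot) = d_{\sG}(v_c,\cdot)$, which must be invoked explicitly and is where all the leverage comes from. As a useful by-product, the two bounds together force $\max_u d_{\sI}(v_c,u) = \epsilon(\sI)$, so $v_c$ is in fact a center of $\sI$; this reconciles the tree-depth quantity $\eta = \epsilon(\sI)$ used in \cref{algo:sched} with the minimum-eccentricity reading in the lemma, and confirms that the number of scheduled time slots equals $\epsilon(\sG)$, as required for the delay-optimality claim \texttt{R2}.
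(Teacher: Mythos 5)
Your proof is correct, and it rests on the same two pillars as the paper's own argument: the BFS tree $\sI$ spans $\sV$ (hence $|\sE_{\sI}| = |\sV| - 1$, and any spanning tree of an unweighted graph is minimum), and BFS preserves distances from the root, $d_{\sI}(v_c, u) = d_{\sG}(v_c, u)$ for all $u$. Where you genuinely diverge is in how the eccentricity equality is closed. The paper proves it by a single chain of equalities ending in $\max\{d(u, v_c)\}_{u \in \sV_{\sI}} = \epsilon(\sI)$, i.e., it silently asserts that $v_c$ is still a minimum-eccentricity node of the tree $\sI$; that step is never justified, and it is not automatic, since a priori some other node could have smaller eccentricity in $\sI$. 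Your two-inequality decomposition supplies exactly the missing half: because $\sI$ is a spanning subgraph of $\sG$, $d_{\sI}(u, w) \geq d_{\sG}(u, w)$ for every pair, so each node's eccentricity can only grow in passing from $\sG$ to $\sI$, which yields $\epsilon(\sI) \geq \epsilon(\sG)$; combined with the upper bound $\epsilon(\sI) \leq \epsilon_{\sI}(v_c) = \epsilon_{\sG}(v_c) = \epsilon(\sG)$, coming from the BFS identity and the choice of $v_c$ as a center of $\sG$, equality follows, and as a by-product $v_c$ is indeed a center of $\sI$, reconciling $\eta = \epsilon(\sI)$ in \cref{algo:sched} with the delay claim \texttt{R2}. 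So your route is not so much different as more complete: it makes explicit the subgraph-monotonicity step that the paper's proof leaves tacit, at the cost of a slightly longer argument.
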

\begin{proof}\let\qed\relax
First, we prove that $\sI$ is a spanning tree of $\sG$ and, by recalling that $\sG$ is undirected and unweighted, any spanning tree is a minimum spanning tree. To be a spanning tree, $\sI$ must contain all the vertices and have no cycle. By \cite[Lemma~20.6]{cormen_introduction_2022}, applying a breadth-first search to an undirected and unweighted graph yields a breadth-first tree $\sI$ which includes a unique path from $v_c$ to any node in $\sG$ without cycles. Because of this, $\sI$ is also a spanning tree. Next, by applying \cite[Theorem~20.5]{cormen_introduction_2022} inductively to $\sI$, every unique path between node $v$ and $v_c$ in $\sI$ is a shortest path between them in $\sG$. It follows that $\epsilon(\sG) = \min\{ \max_v \{ d(u,v) \}_{u \in \sG} \} = \max_{v_c} \{ d(u,v_c) \}_{u \in \sV_{\sG}} = \max_{v_c} \{ d(u,v_c) \}_{u \in \sV_{\sI}} = \epsilon(\sI)$ since $\sV_{\sG} = \sV_{\sI} = \sV$ and $d(u, v_c)$ in $\sG$ is equal to $d(u, v_c)$ in $\sI$ for all $u \in \sV$.
\end{proof}
By \cref{lem3}, it follows that the transmission matrix $\mA^{(t)}$ at iteration $t$ in function \textproc{Scheduler} has exactly $|\sV_{\eta-t+1}|$ ($\eta = \epsilon(\sI)$) one-valued elements (or scheduled transmissions) because there is only one parent at tier $(\eta-t)$ for a tier-$(\eta-t+1)$ node.
Then, we prove \texttt{R1} as follows. At the end of step~\circled{4}, node $v_c$ performs aggregation using $\vw^{(r)} = \lambda_{v_c} \vw_{v_c} + \sum_{v \in \sV_1} \vw_{v}^{\mathsf{aggr}}$.
It is noted that for each $v \in \sV_1$, $\vw_{v}^{\mathsf{aggr}} = \lambda_v \vw_v + \sum_{u \in \sV_2 \cap \sN_{\sI}(v)} \vw_{u}^{\mathsf{aggr}}$.
By replacing each $\vw_{v}^{\mathsf{aggr}}$ recursively into $\vw^{(r)}$ and noting that $\sV = \cup_{k = 1}^{\epsilon(\sG)} \sV_k \cup \{v_c\}$, we obtain $\vw^{(r)}$ in~\eqref{eq:weight_update}.
Next, from \cref{lem3}, \cref{algo:sched} yields $T_{\mathsf{trans}} = \eta \Delta t = \epsilon(\sI) \Delta t = \epsilon(\sG) \Delta t = T^*$ by definition, which completes the proof for \texttt{R2}.
Finally, the proof for \texttt{R3} is given as follows. Given $[\cdot]$ being the Iverson bracket notation, the number of scheduled transmissions $N_{\mathsf{trans}} = \sum_{t = 1}^{\epsilon(\sG)} \sum_{i,j \in \sV} [\mA^{(t)}_{i, j} = 1] = \sum_{t = 1}^{\epsilon(\sG)} |\sV_t| = |\sE_{\sI}| = |\sV| - 1 = N^*$ also by \cref{lem3} and definition. 
\qed
% \end{proof}

\bibliographystyle{IEEEtranN}
{\footnotesize \bibliography{references}}

\end{document}